%% file: main.tex
\documentclass[12pt]{article}

\usepackage{amsmath,amssymb,amsthm}

\usepackage{newtxtext}
\usepackage[nosymbolsc,amssymbols]{newtxmath}

\usepackage[margin=1in]{geometry}
\usepackage{setspace}
\raggedright
\usepackage{graphicx}
\usepackage{booktabs}
\usepackage{multirow}
\usepackage{array}

\usepackage{natbib}
\usepackage{hyperref}
\hypersetup{
  pdftitle={Design-Based Variance Estimation for Modern Heterogeneity-Robust Difference-in-Differences Estimators},
  pdfauthor={Isaac Gerber},
  pdfkeywords={difference-in-differences, survey statistics, design-based variance, influence functions, Binder linearization},
  colorlinks=true,
  linkcolor=blue,
  citecolor=blue,
  urlcolor=blue
}

\newtheorem{proposition}{Proposition}
\newtheorem{assumption}{Assumption}
\newtheorem{remark}{Remark}

\newcommand{\E}{\mathbb{E}}

\newcommand{\IF}{\mathrm{IF}}
\newcommand{\ATT}{\mathrm{ATT}}
\newcommand{\DEFF}{\mathrm{DEFF}}
\newcommand{\diag}{\mathrm{diag}}

\title{Design-Based Variance Estimation for Modern\\
  Heterogeneity-Robust Difference-in-Differences Estimators}
\author{Isaac Gerber\thanks{Meta Platforms, Inc.}}
\date{May 2026}

\begin{document}
\maketitle

\begin{abstract}
Modern heterogeneity-robust difference-in-differences estimators typically
derive their asymptotic properties under iid, cluster, or fixed-design
frameworks that abstract from complex survey sampling, yet practitioners
routinely apply them to nationally representative surveys with complex
stratified cluster designs. To our knowledge, prior work has not explicitly applied survey-sampling Taylor
linearization to the influence-function representations of these estimators.
We bridge this gap by showing that, under standard regularity conditions, the
influence functions established in the literature for each smooth IF-based or
regression-based modern DiD estimator considered here satisfy Binder's (1983)
smoothness conditions, so that applying the standard stratified-cluster
variance formula to their values produces design-consistent standard errors. A Monte Carlo study with 66,000
replications shows where the design effect comes from. HC1 standard errors
that treat observations as independent and unweighted produce coverage as
low as 34\% under a baseline survey design and below 11\% under
informative sampling. Combining the survey-weighted point estimate with
PSU-level clustering --- the practitioner's \texttt{cluster=psu} heuristic
--- recovers near-nominal coverage across all four scenarios, including
under informative sampling. Adding strata and finite-population
corrections (the full design specification) yields incremental precision
but is not required for valid coverage in our simulations. Our simulations further demonstrate that survey-weighted doubly robust
estimation with covariate adjustment produces well-calibrated inference when
parallel trends hold only conditionally. An empirical illustration using NHANES data on the ACA's
dependent coverage provision confirms that both point estimates and standard
errors change substantively --- enough to reverse significance conclusions ---
when the survey design is accounted for. To facilitate adoption, we provide
\texttt{diff-diff} (\url{https://github.com/igerber/diff-diff}), an
open-source Python package implementing design-based variance for fifteen
modern DiD estimators.
\end{abstract}

\textbf{Keywords:} difference-in-differences, survey sampling, Taylor series linearization,
influence functions, heterogeneity-robust estimation, complex survey design

% ======================================================================
\section{Introduction}
\label{sec:introduction}

Policy evaluations frequently rely on nationally representative surveys ---
NHANES for health outcomes, ACS for demographics and housing,
CPS for labor force dynamics, and MEPS for medical expenditure. These surveys employ stratified multi-stage cluster sampling to
achieve national coverage at manageable cost. The resulting data carry two
features that invalidate iid-based standard errors: observations within the
same primary sampling unit (PSU) are correlated, and stratification constrains
sampling variability. The ratio of design-based variance to the iid HC1
variance --- the design effect (DEFF) --- commonly ranges from 2 to 5 in
health and social surveys.

This matters especially for difference-in-differences (DiD) estimation.
Treatment is often assigned at geographic levels --- state policies, county
programs, school-district mandates --- that encompass multiple survey PSUs,
so within-PSU homogeneity in treatment status intensifies the design effect. DiD estimands involve
contrasts across groups and time periods, amplifying distortions in variance
estimation. Incorrect standard errors can flip significance conclusions for
policy-relevant effect sizes, undermining the credibility of program
evaluations.

\subsection*{The gap in modern DiD theory}

The modern heterogeneity-robust DiD literature derives estimators and their
asymptotic properties under sampling assumptions that are incompatible with
complex survey designs. The foundational papers either assume iid sampling
explicitly or adopt frameworks that do not incorporate strata, PSU clustering,
or finite population corrections:

\begin{itemize}
\item \citet{callawaysantanna2021} state iid as a numbered assumption
  (Assumption~2) and derive the multiplier bootstrap under it.
\item \citet{santannazhao2020} assume iid (Assumption~1) and derive the
  doubly robust influence function and semiparametric efficiency bounds.
\item \citet{borusyak2024} adopt a conditional/fixed-design framework that
  avoids random sampling assumptions, conditioning on the observation set.
\item \citet{sunabraham2021} maintain iid as an unstated but operative
  assumption in deriving the interaction-weighted estimator.
\item \citet{dechaisemartin2020} assume group-level independence
  (Assumption~3), which does not map to stratified-cluster survey data.
\item \citet{gardner2022} invokes standard GMM regularity conditions that
  implicitly require iid or ergodic stationary data.
\end{itemize}

The most comprehensive recent synthesis --- \citet{roth2023} --- discusses
design-based inference in the treatment-assignment sense \citep{atheyimbens2022},
where randomness comes from which units receive treatment. It does not address
survey sampling design, survey weights, or strata/PSU/FPC-based variance
estimation.

A note on terminology is warranted. The recent DiD literature uses
``design-based'' to refer to treatment-assignment design
\citep{atheyimbens2022}, where uncertainty arises from which units receive
treatment. Throughout this paper, ``design-based'' refers to survey sampling
design \citep{binder1983}, where uncertainty arises from which units are
sampled. Same term, different referent.

\subsection*{The gap in software}

Existing software implementations reflect this theoretical gap. R's \texttt{did}
package \citep{callawaysantanna2021} accepts a \texttt{weightsname} parameter
for point estimation and supports cluster-level multiplier bootstrap, but does
not account for stratification or finite population corrections. Stata's
\texttt{csdid} accepts \texttt{pweight} for point estimation but does not
support the \texttt{svy:} prefix --- there is no mechanism for strata or FPC.
Other packages --- \texttt{did\_multiplegt\_dyn}, \texttt{eventstudyinteract},
\texttt{didimputation} --- likewise support weights for point estimation and
allow cluster-robust inference, but none provides full survey-design variance
estimation that jointly accounts for strata, PSU clustering, and finite
population corrections.

\subsection*{Adjacent work}

The survey statistics literature has developed design-based variance theory for
smooth functionals \citep{binder1983, demnati2004, lumley2004}, and recent work
has extended this to causal inference --- but primarily for cross-sectional
estimators or simple two-period designs, not for modern staggered DiD.

\citet{dugoff2014} provide practical guidance on combining propensity score
methods with complex surveys, but address cross-sectional treatment effects,
not DiD. \citet{zeng2025} derive sandwich variance for survey-weighted
propensity score estimators using influence functions --- the closest work to
the bridge we describe --- but for cross-sectional IPW/augmented weighting, not
staggered DiD. \citet{ye2025} study DiD with repeated cross-sectional survey
data, combining propensity scores with survey weights. However, their estimator
is limited to two periods and two groups, uses bootstrap-only variance (no
analytical design-based derivation), and does not address the modern
heterogeneity-robust estimators considered here.

To our knowledge, prior work has not explicitly applied survey-sampling Taylor
linearization --- in the strata/PSU/FPC sense --- to the influence-function
representations of modern heterogeneity-robust DiD estimators.

\subsection*{Contribution}

This paper bridges the two literatures. The core argument (Section~\ref{sec:theory})
is a verification step: we show that the influence-function representations
established in the original papers for each modern DiD estimator satisfy
Binder's smoothness conditions, so that the standard survey-statistics Taylor
linearization applies. Binder's theorem then implies that the stratified-cluster
variance formula applied to these influence-function values produces a
design-consistent variance estimator. The contribution is not the proposition
itself --- which is a direct corollary of \citet{binder1983} --- but the
explicit identification of these estimators as covered by it, and the
empirical demonstration that the resulting standard errors are well-calibrated
in finite samples.

We demonstrate the practical consequences through a Monte Carlo study
(Section~\ref{sec:simulation}) showing that HC1 confidence interval coverage
can fall as low as 34\% at the nominal 95\% level, with design effects of
2--17 times the HC1 variance in the baseline scenario and exceeding 100 times
under informative sampling. The simulation also decomposes where the
correction has to come from. Two ingredients matter for valid coverage: the
survey-weighted point estimate (which corrects estimand bias under
informative sampling) and PSU-level clustering (which corrects within-cluster
correlation). Combining the two gives near-nominal coverage in every
scenario we examine. Adding strata and finite-population corrections is real
but incremental --- a precision gain on top of an already-valid
specification, not a coverage requirement. An empirical illustration using NHANES
data on the ACA's dependent coverage provision
(Section~\ref{sec:empirical}) shows that ignoring the survey design changes
both the point estimate (by 48\%) and the significance conclusion.
Section~\ref{sec:software} describes the \texttt{diff-diff} Python package
(v3.3.2; \citealp{diffdiff2026}), which implements this connection and is,
to our knowledge, the first open-source implementation that jointly
supports strata, PSU clustering, finite population corrections, and
replicate-weight methods for these modern DiD estimators.
Section~\ref{sec:discussion} discusses limitations and directions for
future work.

% ======================================================================
\section{Setup and Notation}
\label{sec:notation}

\subsection{Finite population and survey design}

Consider a finite population $\mathcal{U} = \{1, \ldots, N\}$. The population
is partitioned into $H$ non-overlapping strata. Within stratum $h$, there are
$N_h$ PSUs in the population, of which $n_h$ are sampled. This describes the
standard stratified cluster design used by most federal statistical agencies.
The variance estimator in Proposition~\ref{prop:main} is a first-stage
formula that treats within-PSU totals as the unit of analysis; when
observations are sub-sampled within PSUs (as in multi-stage designs), this
is the standard ``ultimate cluster'' approximation \citep{lumley2004}, which
is exact when PSUs are fully enumerated. When observations are sub-sampled
within PSUs, including the first-stage FPC $(1 - f_h)$ without a second-stage
variance component can be slightly anti-conservative; omitting the FPC
(i.e., assuming with-replacement sampling at the first stage) restores the
conservative property.

Each sampled observation $i$ carries a raw sampling weight
$w_i^* = 1/\pi_i$, where $\pi_i$ is the inclusion probability. The raw
weights estimate the finite-population size: $\hat{N} = \sum_i w_i^*
= n \bar{w}^*$, where $\bar{w}^* = n^{-1} \sum_i w_i^*$. For computation,
we work with normalized weights $w_i = w_i^* / \bar{w}^*$, so that
$\sum_i w_i = n$. Throughout, $w_i$ denotes the normalized weight, and we
write $\hat{W} \equiv \sum_i w_i = n$ for the normalized-weight total
(distinct from the population estimate $\hat{N}$). Normalization preserves
relative representativeness and avoids scale dependence in regression
coefficients; the ratio $w_i / \hat{W} = w_i^* / \hat{N}$ is invariant to
normalization, so weighted averages take the same value either way.

The sampling fraction in stratum $h$ is $f_h = n_h / N_h$. When $f_h$ is close
to 1, most of the finite population has been observed and sampling variability
is reduced. The finite population correction factor $(1 - f_h)$ enters the
variance formula to account for this.

\subsection{Target estimand and source of uncertainty}

We formalize the estimand as $\theta = T(F)$, where $T$ is a functional mapping
a distribution to a real number (or vector) and $F$ is the finite-population
distribution of potential outcomes, treatment status, and covariates. For DiD,
$T$ extracts average treatment effects on the treated (ATTs) from this joint
distribution.

Throughout this paper, uncertainty arises exclusively from the survey sampling
mechanism: the population values $\{Y_i(0), Y_i(1), D_i, X_i\}$ are treated as
fixed, and randomness enters through which units are sampled. This is the
standard design-based framework of \citet{binder1983}, and it is the reason
that the variance estimator includes a finite population correction $(1 - f_h)$
that drives variance to zero as the sampling fraction approaches one. This
framework is distinct from the treatment-assignment uncertainty considered by
\citet{atheyimbens2022}, and from the superpopulation framework in which
parallel trends is typically stated. We adopt the convention that the parallel
trends assumption holds in the finite population; the design-based variance
then captures uncertainty about which members of that population we observe.
Survey weights affect the estimand (they determine which population the ATT
targets) as well as the variance (they determine how precisely we estimate it).

Under the survey design, the survey-weighted empirical distribution is the
H\'{a}jek (self-normalized) form:
\begin{equation}
  \hat{F}_w = \frac{\sum_i w_i \, \delta_{x_i}}{\sum_i w_i},
  \label{eq:hajek}
\end{equation}
where the sum is over sampled observations and $\delta_{x_i}$ is the point
mass at $x_i$. When $T$ is a smooth functional, the plug-in estimator
$\hat{\theta} = T(\hat{F}_w)$ is design-consistent for $\theta = T(F)$: as the
sample size grows within the finite-population asymptotic framework,
$\hat{\theta}$ converges in probability to $\theta$.

The abstraction of $\theta$ as a functional of $F$ is what enables the bridge
between survey statistics and DiD: both literatures reason about functionals,
from different perspectives.

\subsection{Survey-weighted estimation}

For regression-based estimators, the point estimates solve weighted least
squares (WLS), minimizing $\sum_i w_i (Y_i - X_i'\beta)^2$. For
influence-function-based estimators, point estimates are constructed from
survey-weighted sample moments, replacing simple sample averages $(1/n)\sum_i$
with weighted averages $(\sum_i w_i)^{-1} \sum_i w_i$. For doubly robust and
IPW variants, the same principle applies to propensity score estimation (via
survey-weighted logistic regression) and outcome regression.

Under the design-based perspective adopted here, survey weights are needed to
ensure that treatment effect estimates correspond to the finite population, not
just the sample \citep{solon2015}. Without weights, ATT estimates reflect the
sample composition, which may over-represent certain strata due to the sampling
design.

Table~\ref{tab:notation} summarizes the notation used throughout the paper.

\begin{table}[ht]
\centering
\caption{Notation summary.}
\label{tab:notation}
\begin{tabular}{@{}ll@{}}
\toprule
Symbol & Definition \\
\midrule
$\mathcal{U} = \{1, \ldots, N\}$ & Finite population \\
$H$ & Number of strata \\
$n_h$ & Sampled PSUs in stratum $h$ \\
$N_h$ & Total PSUs in stratum $h$ (for FPC) \\
$f_h = n_h / N_h$ & Sampling fraction in stratum $h$ \\
$w_i^* = 1/\pi_i$ & Raw sampling weight \\
$w_i = w_i^* / \bar{w}^*$ & Normalized weight ($\sum_i w_i = n$) \\
$F$ & Population distribution \\
$\hat{F}_w$ & Survey-weighted empirical distribution \\
$T(F)$ & Target functional (estimand) \\
$\hat{\theta} = T(\hat{F}_w)$ & Plug-in estimate \\
$\IF_i = \IF(x_i; T, F)$ & Influence function value for observation $i$ \\
$\hat{W} = \sum_i w_i = n$ & Normalized-weight total \\
$\hat{N} = \sum_i w_i^*$ & Estimated finite-population size (raw weights) \\
$\psi_i = w_i \IF_i / \hat{W}$ & Weighted linearized variable \\
\bottomrule
\end{tabular}
\end{table}

% ======================================================================
\section{Design-Based Variance for Modern DiD Estimators}
\label{sec:theory}

This section presents the paper's core contribution: a formal argument that
design-based variance estimation is valid for modern heterogeneity-robust DiD
estimators. The argument proceeds by showing that these estimators satisfy the
conditions of \citet{binder1983}, so that standard stratified-cluster variance
formulas applied to their influence function values produce design-consistent
variance estimators.

\subsection{Influence functions and smooth functionals}
\label{sec:if-smooth}

The influence function (IF) of a functional $T$ at distribution $F$ is the
Gateaux derivative:
\begin{equation}
  \IF(x; T, F) = \lim_{\varepsilon \to 0}
    \frac{T\bigl((1-\varepsilon)F + \varepsilon \, \delta_x\bigr) - T(F)}
         {\varepsilon}.
  \label{eq:if-def}
\end{equation}
This is a property of the map $T$ and the distribution $F$. Crucially, the IF
does not depend on how the sample was drawn --- the same functional $T$ has the
same IF regardless of whether the data come from simple random sampling,
stratified sampling, or cluster sampling. The IF characterizes each
observation's first-order contribution to the estimator.

Each modern heterogeneity-robust DiD estimator can be written as
$\theta = T(F)$ for a smooth functional $T$ that admits an IF representation.
Table~\ref{tab:estimators} classifies the key estimators and the source of
their smoothness.

\begin{table}[ht]
\centering
\caption{Modern DiD estimators and their smoothness properties. Smoothness
  arguments and IF characterizations for each estimator appear in
  Appendix~\ref{app:if-details}.}
\label{tab:estimators}
\begin{tabular}{@{}lll@{}}
\toprule
Estimator & Functional form & Smoothness source \\
\midrule
CS (reg) & Population means in group-time cells & Smooth in moments \\
CS (dr/ipw) & + propensity score + outcome regression &
  \citet{santannazhao2020} \\
Sun--Abraham & Interaction-weighted regression & Implicit function thm \\
Imputation DiD & OLS on untreated + imputation &
  \citet{borusyak2024} Thm~3 \\
TWFE & Standard regression & Smooth in moments \\
\bottomrule
\end{tabular}
\end{table}

\subsection{Main result}
\label{sec:main-result}

We now state the main result formally. The argument combines two established
bodies of theory --- influence functions for smooth functionals and design-based
variance for complex surveys --- in a setting where they have not previously
been connected.

\begin{assumption}[Stratified multi-stage sampling]
\label{ass:survey}
The sample is drawn via stratified multi-stage cluster sampling with $H$
strata and $n_h \geq 2$ sampled PSUs per stratum. Inclusion probabilities
$\pi_i > 0$ are known for all sampled units.
\end{assumption}

\begin{assumption}[Smooth functional]
\label{ass:smooth}
The estimand $\theta = T(F)$ admits a continuous influence function
$\IF(x; T, F)$, and the plug-in estimator satisfies the linearization:
\begin{equation}
  T(\hat{F}_w) - T(F)
  = \sum_{i \in \mathcal{S}} \psi_i
  + o_p(n^{-1/2}),
  \label{eq:vonmises}
\end{equation}
where $\mathcal{S}$ denotes the set of sampled units and
$\psi_i = w_i \IF(x_i; T, F) / \hat{W}$ is the weighted linearized variable
with $\hat{W} = \sum_{i \in \mathcal{S}} w_i$.
\end{assumption}

\begin{proposition}[Design-consistent variance for DiD]
\label{prop:main}
Under Assumptions~\ref{ass:survey} and~\ref{ass:smooth}, the design-based
variance of $\hat{\theta} = T(\hat{F}_w)$ is consistently estimated by
\begin{equation}
  \hat{V} = \sum_{h=1}^{H} (1 - f_h) \frac{n_h}{n_h - 1}
    \sum_{j=1}^{n_h} (\psi_{hj} - \bar{\psi}_{h\cdot})^2,
  \label{eq:binder-var}
\end{equation}
where $\psi_{hj} = \sum_{i \in \text{PSU}_{hj}} \psi_i$ is the PSU-level
total of influence function values and
$\bar{\psi}_{h\cdot} = n_h^{-1} \sum_{j=1}^{n_h} \psi_{hj}$ is the
within-stratum mean.
\end{proposition}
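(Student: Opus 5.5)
The plan is to reduce Proposition~\ref{prop:main} to the classical design-based variance theory for estimated totals under stratified cluster sampling, and then transfer the result to $\hat\theta$ through the linearization in Assumption~\ref{ass:smooth}. The steps are as follows.

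\textbf{Step 1: reduce $\hat\theta$ to a linear statistic.} By \eqref{eq:vonmises}, $\sqrt{n}(\hat\theta-\theta)$ equals $\sqrt{n}\,\hat t + o_p(1)$, where $\hat t=\sum_{i\in\mathcal S}\psi_i$. It follows that the design variance of $\hat\theta$ and that of $\hat t$ agree to first order.

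Here $\hat W=n$ is deterministic, and $w_i/\hat W=w_i^*/\hat N$. For a fixed-size design, $\hat N$ is itself a Horvitz--Thompson total. I would therefore first replace $\hat W$, or equivalently $\hat N$, by its design expectation $N$. By design consistency of the H\'ajek ratio, this costs only $o_p(n^{-1/2})$, because the Gateaux property implies $\sum_{i\in\mathcal U}\IF(x_i;T,F)=0$. The resulting variable $\tilde t = N^{-1}\sum_{i\in\mathcal S} w_i^*\,\IF_i$ is a Horvitz--Thompson estimator of the population total of $z_i=\IF_i/N$.

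\textbf{Step 2: apply the classical stratified-cluster formula.} Write $\tilde t=\sum_h\sum_{j=1}^{n_h}\tilde\psi_{hj}$, where $\tilde\psi_{hj}$ are PSU-level weighted totals. Under Assumption~\ref{ass:survey}, PSUs are drawn independently across strata. Under the ultimate-cluster approximation, they are drawn by without-replacement SRS (or with replacement) within strata. The standard result (Cochran; Wolter; \citealp{binder1983}) then gives
\[
\operatorname{Var}(\tilde t)=\sum_h (1-f_h)\, n_h\, S_h^2 ,
\]
where $S_h^2$ is the population variance of scaled PSU totals. It also gives that the sample analogue $\sum_h(1-f_h)\frac{n_h}{n_h-1}\sum_j(\tilde\psi_{hj}-\bar{\tilde\psi}_{h\cdot})^2$ is design-unbiased and, by a weak law over PSUs, ratio-consistent. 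The requirement $n_h\ge2$ is exactly what makes the within-stratum sample variance defined.

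\textbf{Step 3: replace population quantities by estimates.} The formula \eqref{eq:binder-var} uses $\psi_i$, which is built from $\hat W$ and from the population $\IF(\cdot;T,F)$. In practice one uses $\IF(\cdot;T,\hat F_w)$. I would show that $n\,|\hat V-\tilde V|=o_p(1)$ through three observations:
\begin{itemize}
\item The mean-centering within strata removes any constant shift.
\item $N/\hat W$ scaling converges to one.
\item $\max_i|\widehat\IF_i-\IF_i|\to_p 0$ by continuity of the IF (Assumption~\ref{ass:smooth}) and consistency of $\hat F_w$.
\end{itemize}
For the last point, Cauchy--Schwarz bounds the cross terms, given a bounded second moment of the PSU totals.

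\textbf{Main obstacle.} The hardest step is Step 3, together with the regularity hidden in Step 2: ratio-consistency of $\hat V$ requires that $n\operatorname{Var}(\tilde t)$ be bounded away from zero. It also requires uniform moment conditions on PSU totals (e.g., bounded $2+\delta$ moments, bounded PSU sizes, and weights with $\max w_i^*/\min w_i^*$ bounded) along Binder's nested sequence of finite populations.

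I would add these as the standard Binder/Isaki--Fuller asymptotic framework conditions and invoke his theorem directly rather than re-derive it. The remaining work is then to verify, via the estimator-by-estimator IF characterizations in Appendix~\ref{app:if-details}, that the estimated IFs converge uniformly enough. That verification is the substantive content of the corollary.
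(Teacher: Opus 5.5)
Your proposal is correct and takes essentially the same route as the paper's proof sketch: linearize $\hat\theta$ via the expansion in Assumption~\ref{ass:smooth}, then invoke \citet{binder1983} for the stratified-cluster variance of the linearized total. Compared with the paper, you are more explicit about the H\'ajek-to-Horvitz--Thompson normalization, the substitution of an estimated influence function, and the growing-number-of-PSUs regularity conditions; one small slip is that the scaling factor tending to one is $N/\hat N$, not $N/\hat W$, since $\hat W = n$.
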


\begin{proof}[Proof sketch]
The argument proceeds in four steps.

\emph{Step 1: Modern DiD estimators are smooth functionals.}
As shown in Table~\ref{tab:estimators} and detailed in
Appendix~\ref{app:if-details}, each estimator reduces to combinations of
weighted means, regression coefficients, and smooth transformations thereof.
Each admits an IF representation, satisfying Assumption~\ref{ass:smooth}.

\emph{Step 2: IFs do not depend on the sampling design.}
The IF is a property of the map $T$ and the population distribution $F$
(equation~\ref{eq:if-def}). It does not depend on how the sample was drawn.

\emph{Step 3: Under survey weighting, the same IF form applies.}
Under survey weighting, the von Mises expansion
(equation~\ref{eq:vonmises}) expresses $\hat{\theta} - \theta$ as a weighted
sum of IF values over sampled observations. The variance of $\hat{\theta}$ is
therefore determined by the sampling design applied to these IF values.

\emph{Step 4: Binder's result.}
\citet{binder1983} showed that for any smooth functional satisfying
Assumption~\ref{ass:smooth}, applying the standard stratified-cluster
variance formula to the per-unit IF values produces a design-consistent
variance estimator. Equation~\eqref{eq:binder-var} is exactly this formula.
\end{proof}

\begin{remark}[Scope of the contribution]
\label{rem:scope}
Proposition~\ref{prop:main} is a direct application of \citet{binder1983} to
an arbitrary smooth functional. The substantive contribution of this paper is
not the proposition itself but the verification that each modern DiD estimator
in Table~\ref{tab:estimators} satisfies Assumption~\ref{ass:smooth}, and the
demonstration through simulation (Section~\ref{sec:simulation}) that the
resulting variance estimator is well-calibrated in finite samples. This
verification rests on the influence function representations established in the
original papers cited in Appendix~\ref{app:if-details}.
\end{remark}

\begin{remark}[Regularity conditions]
\label{rem:regularity}
The regularity conditions for Assumption~\ref{ass:smooth} are satisfied by each
estimator class through established results. For regression-based estimators
(TWFE, Sun--Abraham), the conditions follow from the implicit function theorem
applied to the estimating equations. For doubly robust estimators
(Callaway--Sant'Anna with DR), they follow from the semiparametric theory of
\citet{santannazhao2020}, though we note that their derivation assumes iid
sampling; the extension to survey-weighted nuisance estimation is supported by
our simulation evidence (Section~\ref{sec:sim-s4}) but not formally derived
here. For imputation estimators (Borusyak--Jaravel--Spiess), the IF from
Theorem~3 of \citet{borusyak2024} satisfies these conditions.
\end{remark}

\subsection{Variance computation}
\label{sec:variance-computation}

Proposition~\ref{prop:main} provides the theoretical foundation. In practice,
the variance is computed through one of two operational paths, depending on the
estimator's structure.

\paragraph{Regression-based TSL sandwich.}
For regression-based estimators (TWFE, Sun--Abraham, Stacked DiD), the
variance-covariance matrix is the stratified-cluster sandwich
\citep{binder1983}:
\begin{equation}
  V_{\text{TSL}} = (X'WX)^{-1}
    \left[\sum_{h=1}^{H} V_h \right]
    (X'WX)^{-1},
  \label{eq:sandwich}
\end{equation}
where $W = \diag(w_1, \ldots, w_n)$ and the stratum-level meat is
\begin{equation}
  V_h = (1 - f_h) \frac{n_h}{n_h - 1}
    \sum_{j=1}^{n_h} (T_{hj} - \bar{T}_{h\cdot})(T_{hj} - \bar{T}_{h\cdot})',
  \label{eq:meat}
\end{equation}
with $T_{hj} = \sum_{i \in \text{PSU}_{hj}} w_i X_i u_i$ the PSU-level
weighted score total ($u_i = Y_i - X_i'\hat{\beta}$ is the residual).

\paragraph{IF-based TSL.}
For IF-based estimators (Callaway--Sant'Anna, imputation DiD, two-stage DiD),
the variance is computed directly from the weighted linearized variables
$\psi_i$ without the bread matrix, using equation~\eqref{eq:binder-var}. The
estimator first computes per-unit influence function values $\IF_i$ for each
group-time cell, forms the weighted linearized variables
$\psi_i = w_i \IF_i / \hat{W}$, aggregates across cells, and then applies the
stratified-cluster formula.

\paragraph{Degrees of freedom.}
Inference uses the $t$-distribution with survey degrees of freedom
$\text{df} = \sum_h n_h - H$, i.e., the total number of sampled PSUs minus
the number of strata. For designs with replicate weights, the degrees of
freedom are computed from the rank of the replicate weight matrix, matching the
convention of R's \texttt{survey} package \citep{lumley2004}.

\paragraph{Replicate weight variance.}
An alternative to TSL is replicate weight variance, which perturbs the weights
and observes the resulting variation in estimates rather than linearizing the
estimator. The general form is
$V_{\text{rep}} = c \sum_r s_r (\hat{\theta}_r - \hat{\theta}_{\text{center}})^2$,
where $\hat{\theta}_r$ is the estimate under replicate $r$ and $c$ and $s_r$
are method-specific constants. This approach is useful when pre-computed
replicate weights are provided with public-use survey files (e.g., ACS PUMS,
CPS, BRFSS, RECS); for surveys without packaged replicate weights ---
including NHANES, where the CDC documentation recommends Taylor series
linearization with the masked strata and PSU variables --- TSL is the
canonical path. Five standard methods --- balanced repeated replication
(BRR), Fay's BRR, jackknife (JK1, JKn), and successive difference replication
(SDR) --- are detailed in Appendix~\ref{app:replicate}.

\subsection{Limitations of survey weighting for DiD}
\label{sec:limitations}

Before proceeding to empirical validation, we clarify the boundaries of the
design-based framework. Survey weighting addresses the sampling design. It does
not resolve every threat to valid causal inference with DiD. Practitioners
should be aware of the
following limitations.

\paragraph{Parallel trends.}
Survey weighting ensures that treatment effect estimates target the correct
population. It does not validate the parallel trends assumption. Under the
design-based framework, parallel trends must hold for the population, not just
the sample. If parallel trends fail in the population, survey-weighted
estimates remain biased --- with correctly estimated standard errors around the
wrong estimand.

\paragraph{Small-cluster asymptotics.}
TSL variance requires at least 2 PSUs per stratum ($n_h \geq 2$). With few
PSUs per stratum --- common in some state-based surveys --- the
$t$-distribution approximation with $\text{df} = \sum_h n_h - H$ may be
anti-conservative. Practitioners should assess the survey degrees of freedom
directly.

\paragraph{Estimand and weights.}
Survey weighting is necessary for the estimator to target the finite-population
ATT rather than the sample ATT --- this is precisely the correction demonstrated
in Scenario~2 of the simulation study. However, when different weighting schemes
are available (e.g., person-level vs.\ household-level weights, or calibrated
vs.\ design weights), the choice of weights determines which population
parameter is targeted \citep{solon2015}. The design-based variance estimator is
consistent for the variance of whatever the weighted estimator targets, but
practitioners should ensure the weighting scheme corresponds to the intended
estimand.

\paragraph{Weight variability.}
Highly variable weights reduce effective sample size. The Kish design effect
due to unequal weighting, $\DEFF_w = n \sum w_i^2 / (\sum w_i)^2$, measures
this: when $\DEFF_w \gg 1$, estimates are less precise than the nominal sample
size suggests.

\paragraph{Model misspecification.}
For doubly robust and IPW estimators, the IF corrections for propensity score
and outcome regression uncertainty assume correct specification of at least one
nuisance model. Survey weighting does not rescue a badly specified propensity
score or outcome model.

% ======================================================================
\section{Simulation Study}
\label{sec:simulation}

We assess the practical consequences of ignoring survey design in DiD
estimation through a Monte Carlo study with four scenarios, each targeting a
distinct aspect of the design-based variance framework developed in
Section~\ref{sec:theory}.

\subsection{Design of the simulation study}
\label{sec:sim-design}

\paragraph{Data-generating process.}
We generate staggered DiD panel data with stratified multi-stage survey
structure using a purpose-built DGP. The outcome for unit $i$ in period $t$ is:
\begin{equation}
  Y_{it} = \alpha_i + \gamma_{p(i)} + \eta_{p(i),t} + 0.5t
  + \beta_1 x_{1i} + \beta_2 x_{2i} + \tau_{it} + \varepsilon_{it},
  \label{eq:dgp}
\end{equation}
where $\alpha_i \sim N(0, \sigma^2_\alpha)$ is a unit fixed effect,
$\gamma_{p(i)} \sim N(0, \sigma^2_\gamma)$ is a PSU random effect for the PSU
$p(i)$ to which unit $i$ belongs, $\eta_{p(i),t} \sim N(0, \sigma^2_\eta)$ is
a PSU-by-period shock, $x_{1i} \sim N(0,1)$ and $x_{2i} \sim
\text{Bernoulli}(0.5)$ are covariates with coefficients $\beta_1 = 0.5$ and
$\beta_2 = 0.3$, $\tau_{it}$ is the treatment effect (zero for untreated
observations), and $\varepsilon_{it} \sim N(0, \sigma^2_\varepsilon)$ is
idiosyncratic noise. The PSU random effects $\gamma_{p(i)}$ create intra-cluster correlation in
outcome levels (relevant for cross-sectional estimators and the
within-transformation); the PSU-period shocks $\eta_{p(i),t}$ are the primary
driver of design effects in panel DiD, as they survive time-differencing and
inflate design-based SEs relative to HC1 SEs.

The survey structure consists of $H = 5$ strata with $n_h = 8$ sampled PSUs
per stratum and finite population correction $N_h = 200$. Sampling weights are
stratum-based with dispersion controlled by the coefficient of variation (CV).
Treatment is staggered across two cohorts (first treated at periods 3 and 5)
with a never-treated comparison group.

\paragraph{Estimators.}
We compare four estimators spanning the two variance paths from
Section~\ref{sec:variance-computation}:

\begin{itemize}
\item \textbf{Callaway--Sant'Anna, regression} (CS-reg): IF-based, no
  covariates. The workhorse modern DiD estimator
  \citep{callawaysantanna2021}.
\item \textbf{Callaway--Sant'Anna, doubly robust} (CS-DR): IF-based, with
  covariates $x_1$ and $x_2$ \citep{santannazhao2020}.
\item \textbf{Sun--Abraham} (SA): Regression-based, interaction-weighted
  \citep{sunabraham2021}.
\item \textbf{Two-way fixed effects} (TWFE): The traditional regression
  approach, included as a baseline to illustrate the bias that
  heterogeneity-robust estimators correct.
\end{itemize}

For each estimator and replication, we compute the ATT and standard error
under three inference approaches: (i)~\emph{HC1}, using
heteroskedasticity-robust standard errors with no clustering, no weighting,
and no survey design --- the default in most DiD software; (ii)~\emph{PSU
cluster}, applying the survey-weighted point estimate with PSU-level
clustering only (no strata, no FPC) --- the practitioner heuristic of
\texttt{cluster=psu} combined with survey weights; and
(iii)~\emph{design-based}, using TSL with strata, PSU, FPC, and survey
weights as developed in Section~\ref{sec:theory}. Approaches (ii) and (iii)
share the same survey-weighted point estimate; they differ only in the
variance estimator.

\paragraph{Metrics.}
For each cell (scenario $\times$ estimator $\times$ sample size), we report:
\emph{Bias} (mean estimate minus true ATT);
\emph{Coverage} (fraction of 95\% CIs containing the true ATT);
\emph{DEFF} (ratio of design-based to HC1 variance,
$\hat{V}_\text{design} / \hat{V}_\text{HC1}$); and
\emph{RMSE} (root mean squared error).
Each cell consists of 2,000 independent replications at sample sizes
$n \in \{500, 2{,}000, 8{,}000\}$, yielding 66,000 total replications across
the study. The four estimators span both variance paths from
Section~\ref{sec:variance-computation}; simulation validation of the remaining
estimators in Table~\ref{tab:estimator-coverage} is left for future work.
With 2,000 replications, the Monte Carlo standard error for a coverage
estimate near the nominal 95\% level is approximately $\sqrt{0.95 \times
0.05 / 2000} \approx 0.5$ percentage points, so reported coverage values
within roughly $\pm$1 percentage point of nominal should be interpreted as
indistinguishable from $0.95$ at conventional simulation precision.

\subsection{Scenario 1: Unconditional parallel trends with complex survey}
\label{sec:sim-s1}

The baseline scenario generates data under standard parallel trends with
moderate intra-class correlation ($\text{ICC} = 0.1$) and weight variation
($\text{CV} = 0.5$), producing a realistic survey design where design effects
are present but not extreme. Table~\ref{tab:sim-s1} reports the results.

\input{tables/sim_s1}

\begin{figure}[ht]
\centering
\includegraphics[width=\textwidth]{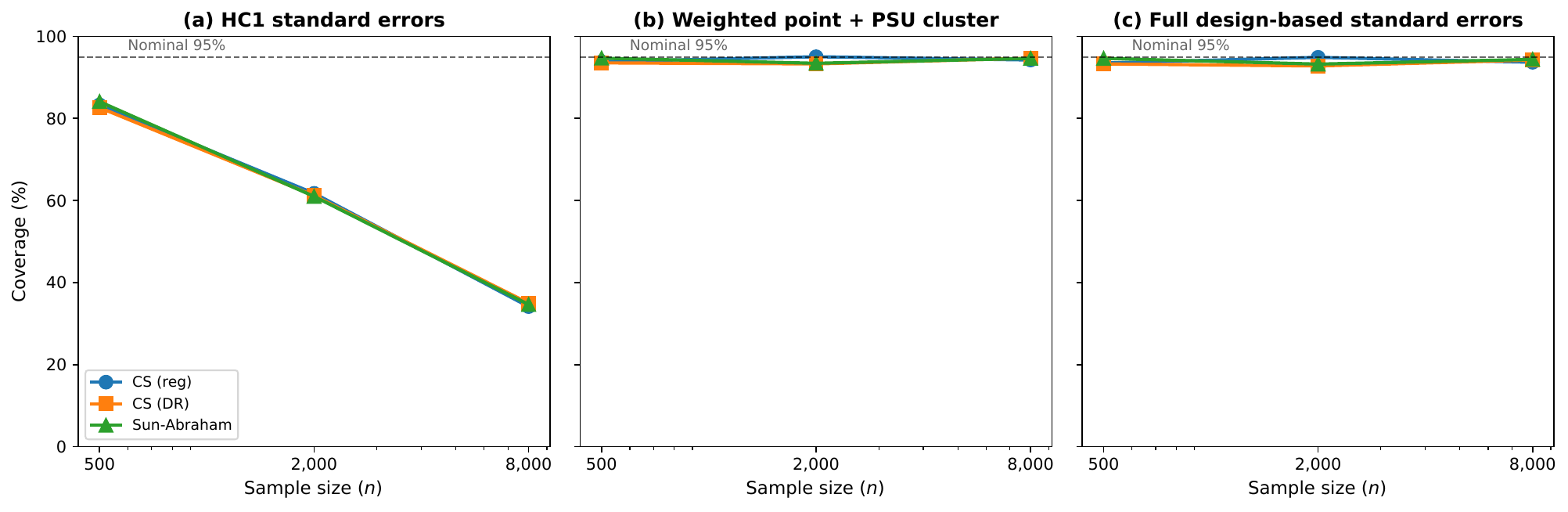}
\caption{95\% confidence interval coverage for modern DiD estimators under
  complex survey design (Scenario~1). Panel~(a): HC1 standard errors that
  ignore the survey design entirely. Panel~(b): the survey-weighted point
  estimate combined with PSU-level clustering only (no strata, no FPC) ---
  the practitioner heuristic. Panel~(c): full design-based standard errors
  from Proposition~\ref{prop:main}. The dashed line marks the nominal
  95\% level. HC1 coverage collapses as sample size grows; both PSU-clustered
  and full design-based coverage remain near nominal across all estimators
  and sample sizes. TWFE (not shown) has 0\% coverage at all sample sizes
  due to estimand bias under staggered treatment. Based on 2,000
  replications per cell.}
\label{fig:coverage-s1}
\end{figure}

The core finding is stark: \textbf{HC1 standard errors produce dramatically
under-covered confidence intervals under complex survey designs, and the
problem worsens with larger samples.} At $n = 500$, HC1 coverage for CS-reg
is 83.3\% --- already below the nominal 95\% --- but at $n = 8{,}000$ it
collapses to 34.2\%. This is not a small-sample artifact; it reflects the
fundamental mismatch between the iid assumption embedded in HC1 SEs and the
stratified-cluster structure of the data. The estimated design effect grows from
2.1$\times$ at $n = 500$ to 17.3$\times$ at $n = 8{,}000$, as larger samples
make the cluster structure increasingly salient relative to the within-cluster
variation.

Design-based SEs from Proposition~\ref{prop:main} correct this: coverage is
92.8--95.0\% across all modern estimators and sample sizes, closely
approximating the nominal 95\% level. The correction is consistent across
CS-reg, CS-DR, and Sun--Abraham, confirming that the theoretical framework
applies uniformly to both IF-based and regression-based variance paths.

\paragraph{Does PSU clustering alone suffice?}
A reader familiar with cluster-robust inference might ask whether simply
clustering at the PSU level --- without strata or finite population
corrections --- is enough once the point estimate is properly survey-weighted.
The ``Cluster'' coverage column in Table~\ref{tab:sim-s1} reports this
directly: the survey-weighted point estimate (which kills any informative-sampling
bias) combined with PSU-level clustering only (no strata, no FPC). In
Scenario~1, PSU clustering matches full design-based coverage at every
sample size --- 93.8\%, 95.0\%, and 94.2\% for CS-reg, versus 93.5\%,
94.9\%, and 93.7\% under the full design. Strata and FPC contribute little
to coverage once PSU clustering and survey weights are in place. As
Scenarios~2--4 will show, the same pattern holds under informative sampling
and conditional parallel trends.

TWFE shows 0\% coverage at every sample size --- not because of the survey
design, but because it is biased under staggered treatment adoption
\citep{dechaisemartin2020}. The ATT estimate of 0.31 versus the true value of
2.0 reflects the well-documented negative weighting problem. Design-based SEs
correctly estimate the variance of this biased estimand but cannot rescue the
point estimate.

\subsection{Scenario 2: Informative sampling with heterogeneous treatment effects}
\label{sec:sim-s2}

This scenario adds two complications: (i)~sampling weights are correlated with
potential outcomes (informative sampling), so high-outcome units receive heavier
weights; and (ii)~treatment effects vary across strata (ranging from 0.6 to
3.4, with a population-weighted ATT of approximately 2.6). Under informative
sampling, unweighted estimators target the sample ATT rather than the
population ATT. Table~\ref{tab:sim-s2} reports the results.

\input{tables/sim_s2}

The unweighted (no-survey-design) estimates are biased by approximately
$-0.19$ across all modern estimators --- they systematically underestimate
the population ATT because the sample over-represents strata with smaller
treatment effects. Design-based (survey-weighted) estimates eliminate this
bias, with mean bias near zero for CS-reg, CS-DR, and Sun--Abraham.

HC1 coverage is severely depressed (10--44\%), while design-based coverage
remains near 95\%. The design effects are large (8--136$\times$), driven by
the high weight variation in this scenario. The corrected practitioner
heuristic --- weighted point estimate plus PSU-level clustering --- also
maintains near-nominal coverage (93--97\% across estimators), and is
slightly conservative for CS-reg and CS-DR (96--97\% versus the full
design's 94\%). The over-coverage of the cluster specification reflects the
absence of stratification and finite-population reduction, both of which
shrink the variance toward its design-consistent target. The substantive
takeaway is the same as in Scenario~1: survey weighting is doing the
estimand-correction work and PSU clustering is doing the variance work;
strata and FPC tighten the variance further but are not required for
nominal coverage.

\subsection{Scenario 3: Repeated cross-sections}
\label{sec:sim-s3}

Many federal surveys (BRFSS, NHANES, MEPS) are repeated cross-sections rather
than panels. This scenario verifies that the design-based framework extends to
this setting, using CS-reg with $\texttt{panel} = \texttt{False}$.
Table~\ref{tab:sim-s3} reports the results.

\input{tables/sim_s3}

Design-based coverage is 93.5--94.7\% across all sample sizes, confirming that
Proposition~\ref{prop:main} applies to repeated cross-sectional survey data.
HC1 coverage degrades from 94.5\% at $n = 500$ to 69.5\% at $n = 8{,}000$,
following the same pattern as the panel case in Scenario~1 (though with smaller
design effects because the cross-sectional DGP has less persistent clustering).

\subsection{Scenario 4: Conditional parallel trends}
\label{sec:sim-s4}

This scenario examines whether survey-weighted doubly robust estimation ---
where both the propensity score and outcome regression are estimated with
survey weights --- produces valid inference under conditional parallel trends.

The DGP adds an X-dependent time trend: the outcome includes a term
$\delta \cdot x_{1i} \cdot (t / T)$ where $\delta = 1.5$, and treated units'
$x_1$ is drawn from $N(1, 1)$ rather than $N(0, 1)$. Because
$\E[x_1 \mid \text{treated}] \neq \E[x_1 \mid \text{control}]$, the average
time trend differs by group, violating unconditional parallel trends.
Conditional on $x_1$, trends are identical --- conditional parallel trends
holds. Table~\ref{tab:sim-s4} reports the results.

\input{tables/sim_s4}

CS-reg, which does not condition on covariates, exhibits large and persistent
bias ($+0.57$, roughly 29\% of the true ATT) with 0\% coverage at every sample
size. The bias does not diminish with $n$ because it reflects a
misspecification of the identifying assumption, not sampling variability.

CS-DR, which conditions on $x_1$ and $x_2$ via doubly robust estimation with
survey-weighted nuisance models, eliminates this bias: mean bias is near zero
at $n = 2{,}000$ with 94.7\% coverage. The RMSE ratio between the two
approaches is 8.2$\times$.

These simulation results provide strong evidence that the design-based variance
framework extends to doubly robust estimation with survey-weighted nuisance
models. A complete formal derivation would require showing that the
\citet{santannazhao2020} influence function expansion carries through when the
propensity score and outcome regression are estimated with survey-weighted
estimating equations --- a nontrivial extension of their iid theory that we
leave for future work. The simulation evidence here, with coverage at
92.7--94.7\% across 2,000 replications and three sample sizes, suggests that
the composition of survey-weighted nuisance estimation with design-based
variance is well calibrated in this DGP. The practical implication is that practitioners working with survey
data where parallel trends hold only after covariate adjustment can use
design-based DR estimation with well-calibrated inference.

% ======================================================================
\section{Empirical Illustration: NHANES and the ACA}
\label{sec:empirical}

We illustrate the practical importance of design-based inference with an
analysis of the Affordable Care Act's (ACA) dependent coverage provision using
data from the National Health and Nutrition Examination Survey (NHANES).

\subsection{Policy setting}

The ACA's dependent coverage mandate, effective September 2010, allowed young
adults to remain on their parents' health insurance until age 26. This created a
natural experiment: adults aged 19--25 gained access to dependent coverage
(treatment group), while adults aged 27--34, who were demographically similar
but ineligible, serve as a comparison group. This age-cutoff DiD design was
established by \citet{antwi2013} and has been widely used in evaluations of ACA
coverage effects \citep{sommers2012}.

Our outcome is binary health insurance coverage, estimated via a linear
probability model. We use two NHANES cycles: 2007--2008 (pre-ACA) and
2015--2016 (post-ACA), yielding $n = 2{,}946$ observations (1,372 in the
treatment group, 1,574 in the comparison group).

\subsection{NHANES survey design}

NHANES employs a stratified multi-stage cluster design to produce nationally
representative health and nutrition data \citep{nchs2018}. The public-use files
provide masked survey design variables: 31 pseudo-strata (SDMVSTRA), 2 PSUs per
stratum (SDMVPSU), and 2-year examination weights (WTMEC2YR). PSU identifiers
are reused across strata, requiring the nested specification when constructing
the survey design object. We follow CDC's analytic guidelines for proper
variance estimation with these design variables.\footnote{The analytic sample
is restricted to MEC-examined respondents (those with $\text{WTMEC2YR} > 0$),
and we use \texttt{WTMEC2YR} as the appropriate weight for that domain. CDC's
\emph{Weighting Module} convention selects the sample weight by the data
source of the smallest-respondent-set variable used; the MEC weight applies
when the analysis is restricted to (or includes any variables from) the MEC
examination component. An alternative analysis using \texttt{WTINT2YR} over
the full interview sample would target a broader population --- adults who
completed only the in-home interview as well as MEC-examined adults. We
expect the qualitative lesson --- that weighting and clustering materially
affect inference --- to remain, but this robustness check is left for future
replication.}

\subsection{Results}

Table~\ref{tab:nhanes} presents DiD estimates under four specifications that
progressively incorporate survey design features.

\input{tables/nhanes}

Three findings emerge. First, \textbf{survey weighting changes the point
estimate substantially}: the unweighted ATT is 6.5 percentage points, while the
survey-weighted ATT is 9.6 percentage points --- a 48\% increase. This
difference arises because NHANES's sampling design over-represents certain
demographic groups. Without weighting, the estimated treatment effect reflects
the sample composition rather than the U.S.\ population of young adults.

Second, \textbf{ignoring PSU clustering understates uncertainty}. Comparing the
weights-only specification (SE $= 0.037$) with the full design specification
(SE $= 0.044$), clustering inflates the standard error by 19\% (a design effect
of approximately 1.4$\times$ on the variance scale). This is modest compared
to the simulation study in Section~\ref{sec:simulation} (where design effects
reached 17$\times$ in the baseline scenario and exceeded 100$\times$ under
informative sampling), reflecting the particular variance structure of this
analytic sample, which has only 2 PSUs per stratum. Even so, the 95\% confidence interval shifts from
$[0.023, 0.170]$ to $[0.006, 0.187]$ --- the lower bound moves substantially
closer to zero.

Third, \textbf{the HC1 specification produces a qualitatively different
conclusion}: the unweighted, no-design ATT of 0.065 (SE $= 0.035$) has a
95\% CI of $[-0.002, 0.133]$ that includes zero, while the design-based ATT
of 0.096 (SE $= 0.044$) has a CI of $[0.006, 0.187]$ that excludes zero.
This significance reversal is driven primarily by the change in the point
estimate from proper survey weighting (0.065 to 0.096), not by the variance
correction --- the design-based SE is actually \emph{larger} than the HC1
SE, which widens the confidence interval. Both corrections --- weighting
the point estimate upward and widening the confidence interval via
design-based SEs --- are needed for valid population-level inference,
though they have opposing effects on the test statistic. In this case, the
point estimate correction dominates and flips the significance conclusion.

Covariate adjustment for gender and family income-to-poverty ratio reduces the
ATT to 6.0 percentage points with a CI that includes zero, suggesting that some
of the measured coverage gain is confounded with demographic shifts between the
pre- and post-ACA periods. This underscores the importance of conditional
parallel trends analysis in survey-weighted DiD --- exactly the setting
addressed by Scenario~4 of the simulation study.

A caveat on the empirical design: our post-period (2015--2016) follows not only
the 2010 dependent coverage mandate but also the major 2014 ACA provisions
(Medicaid expansion and health insurance exchanges), which likely had
differential effects on the treatment and control age groups. Standard
evaluations of the dependent coverage mandate \citep{antwi2013} use post-periods
prior to 2014 to avoid this confound. Our purpose here is to illustrate the
methodological consequences of ignoring the survey design, not to make a causal
claim about the dependent coverage mandate itself.

% ======================================================================
\section{Software}
\label{sec:software}

The theoretical framework and empirical results in this paper are implemented in
\texttt{diff-diff} (v3.3.2; \citealp{diffdiff2026}), an open-source Python
package for difference-in-differences causal inference. The package provides
design-based variance estimation for 15 modern DiD estimators. To our
knowledge, this is the first open-source implementation that jointly
supports strata, PSU clustering, finite population corrections, and
replicate-weight methods for the heterogeneity-robust estimators discussed
in Section~\ref{sec:theory}.

\subsection{API design}

The package follows an sklearn-like API where estimators are instantiated with
configuration parameters and fitted to data via a \texttt{fit()} method. Survey
design information is encapsulated in a \texttt{SurveyDesign} object that
specifies the design variables and is passed to \texttt{fit()} as an optional
argument. This design keeps the survey specification separate from the
estimator logic, so the same estimator code works with or without survey
adjustments:

\begin{verbatim}
from diff_diff import CallawaySantAnna, SurveyDesign

design = SurveyDesign(
    weights="w", strata="strata",
    psu="psu", nest=True
)

cs = CallawaySantAnna(estimation_method="dr")
result = cs.fit(
    data, outcome="y", unit="id",
    time="period", first_treat="first_treat",
    covariates=["x1", "x2"],
    survey_design=design
)
print(result.summary())
\end{verbatim}

\subsection{Estimator coverage}

Table~\ref{tab:estimator-coverage} summarizes the survey variance support for
each estimator class. Regression-based estimators use the TSL sandwich
(equation~\ref{eq:sandwich}); IF-based estimators apply the stratified-cluster
formula directly to influence function values (equation~\ref{eq:binder-var}).
Two estimators (Synthetic DiD, TROP) involve non-smooth optimization and
rely on resampling for survey variance: Synthetic DiD supports Rao--Wu
rescaled bootstrap, stratified placebo permutation, and PSU-level jackknife;
TROP supports Rao--Wu rescaled bootstrap only.

\begin{table}[ht]
\centering
\small
\caption{Survey variance support by estimator.}
\label{tab:estimator-coverage}
\begin{tabular}{@{}llll@{}}
\toprule
Estimator & Variance path & TSL & Replicate \\
\midrule
Callaway--Sant'Anna \citep{callawaysantanna2021} & IF-based & Yes & Yes \\
Sun--Abraham \citep{sunabraham2021} & Regression sandwich & Yes & Yes \\
Imputation DiD \citep{borusyak2024} & IF-based (refit) & Yes & Yes \\
Two-Stage DiD \citep{gardner2022} & IF-based (refit) & Yes & Yes \\
TWFE & Regression sandwich & Yes & Yes \\
Stacked DiD \citep{cengiz2019} & Regression sandwich & Yes & Yes \\
Efficient DiD \citep{chensantannaxie2025} & IF-based (EIF) & Yes & Yes \\
Continuous DiD \citep{callaway2024continuous} & Regression sandwich & Yes & Yes \\
Triple Difference & IF-based & Yes & Yes \\
Staggered Triple Diff & IF-based & Yes & Yes \\
dCDH \citep{dechaisemartin2020} & IF-based & Yes & No \\
Heterogeneous Adoption DiD \citep{dechaisemartin2026had} & IF-based & Yes & No \\
Wooldridge DiD \citep{wooldridge2021} & Regression sandwich & Yes & No \\
Synthetic DiD & Resampling & No & No \\
TROP \citep{athey2025trop} & Bootstrap only & No & No \\
\bottomrule
\end{tabular}
\\[2pt]\footnotesize
\emph{Notes:} ``TSL'' and ``Replicate'' columns refer to analytical and
replicate-weight variance under the Binder-style framework of
Section~\ref{sec:theory}. Synthetic DiD and TROP rely on resampling for
survey variance and so report ``No'' in both analytical columns; their
survey-design support is provided through Rao--Wu rescaled bootstrap (and,
for Synthetic DiD, stratified placebo permutation and PSU-level jackknife),
which is not captured by these two columns.
\end{table}

\subsection{Availability}

\texttt{diff-diff} is available on PyPI (\texttt{pip install diff-diff}) and
GitHub. Documentation, tutorials, and the survey tutorial that accompanies this
paper are available at the project website.

% ======================================================================
\section{Discussion}
\label{sec:discussion}

This paper bridges the survey statistics and modern DiD literatures by showing
that Binder's (1983) design-based variance theory applies directly to the
influence functions of heterogeneity-robust DiD estimators. The result is
simple in retrospect --- the regularity conditions are met, the formulas are
known --- but it has not previously been stated for this class of estimators.
Empirically, the framework recovers the practitioner heuristic that already
sees wide informal use --- weighted point estimates with PSU-level clustering
--- and identifies the additional precision available from the full
strata/FPC specification.

\subsection{Practical guidance}
\label{sec:practical-guidance}

The simulation results in Section~\ref{sec:simulation} resolve cleanly into a
short set of recommendations. The first two steps are necessary; the third
is a precision improvement on top of an already-valid specification.

\paragraph{Always weight the point estimate.}
When the target is a population-level treatment effect, use survey weights for
estimation regardless of which variance specification you choose. The NHANES
illustration in Section~\ref{sec:empirical} showed a 48\% difference in the
point estimate between weighted and unweighted specifications --- a gap that
no variance estimator can recover. Weights affect what the estimand
\emph{is}, not just how precisely it is measured. Under informative sampling,
the bias from skipping weights is large (see Scenario~2, where unweighted
estimators have RMSE roughly $1.1\times$ the weighted RMSE despite identical
sample sizes).

\paragraph{Cluster at the PSU level for the variance.}
For variance, PSU-level clustering through the influence-function
representation --- a \texttt{SurveyDesign} with \texttt{psu} and survey
weights but no strata or FPC --- is sufficient for nominal coverage in every
scenario we examined, including under informative sampling. This matches the
\texttt{cluster=psu} heuristic that many applied practitioners already use, and Section~\ref{sec:theory} provides the formal justification
that has been missing for the modern heterogeneity-robust estimators.

\paragraph{Add strata and FPC for precision, not coverage.}
The full design specification (strata, PSU clustering, FPC, and weights)
narrows the variance modestly relative to the cluster-only specification,
but the gain is incremental. In our Scenario~2 (informative sampling, design
effects up to $136\times$), the cluster-only specification was 1--3
percentage points more conservative than the full design --- the type of
over-coverage that costs precision in these simulations but does not
compromise validity. Practitioners who
have access to the full design variables should use them. Practitioners
working with public-use files where strata or FPC are masked or suppressed
can use the cluster-only specification, which appears conservative for
coverage in the designs we study; full design variables remain preferred
when available, particularly when stratification is fine-grained or
finite-population corrections are large.

\paragraph{Replicate weights are an equivalent path.}
Some federal surveys (BRFSS, ACS, RECS) ship replicate-weight files designed
to capture the full design variance without requiring strata and PSU
identifiers, which are sometimes suppressed for confidentiality.
Replicate-weight methods (BRR, Fay's BRR, JK1, JKn, SDR) applied to the
influence-function values produce variance estimates asymptotically equivalent
to the TSL formula. Practitioners with replicate weights should use them
directly rather than reconstructing strata.

\paragraph{Use the design effect as a sanity check.}
If the estimated DEFF (ratio of design-based to HC1 variance) is close to
unity, the survey design has little impact and reporting HC1 alongside
design-based SEs is reasonable. If DEFF exceeds two --- common in health
and social surveys with intra-cluster correlation --- HC1 inference is
materially misleading and design-based inference (cluster or full) should
be primary.

\subsection{Limitations}

Several limitations should be noted. The TSL variance formula requires at
least two PSUs per stratum ($n_h \geq 2$); designs with singleton strata
require special handling. The $t$-distribution approximation with
$\text{df} = \sum_h n_h - H$ degrees of freedom may be anti-conservative when
the total number of PSUs is small (yielding few overall degrees of freedom),
a common concern in state-based surveys. Our simulation study used a balanced design ($H = 5$ strata, $n_h = 8$
PSUs per stratum) with a fixed number of PSUs as the sample size $n$ increases.
This diverges from the standard PSU-based asymptotic sequence in which both $n$
and the number of PSUs grow together; our simulation instead increases the
within-PSU sample size at fixed design complexity. The performance of
design-based SEs under severely unbalanced designs, alternative asymptotic
regimes, or very few PSUs per stratum merits further investigation. When the
number of PSUs per stratum is small, finite-sample degrees-of-freedom
corrections such as the bias-reduced linearization of \citet{bellmccaffrey2002}
or the small-sample adjustments of \citet{faygraubard2001} can improve
coverage; incorporating these into the design-based DiD framework is a natural
next step.

A related practical issue is the level of clustering. The analytical framework
aggregates to survey PSUs, but policy evaluations often apply treatments at
higher geographic units (states, counties) that may cross-cut the survey
stratification. When the treatment assignment level differs from the survey
clustering level, reconciling the two sources of dependence requires additional
care beyond the scope of this paper.

The framework assumes that the parallel trends assumption holds in the
population. Survey weighting ensures that the estimand corresponds to the
correct population, but it does not validate the identifying assumption itself.
The conditional parallel trends result (Scenario~4) requires correct
specification of at least one nuisance model, as with any doubly robust
estimator.

Our empirical illustration uses a two-period, two-group design. While the
theoretical framework covers the full staggered adoption case, the real-data
example does not demonstrate this. Applications of design-based variance
estimation to staggered treatment designs using federal survey data (e.g.,
state-level policy variation in the CPS or ACS) would be a valuable complement.

\subsection{Future directions}

Several extensions merit investigation. First, calibration and
post-stratification weights --- increasingly common in modern survey practice
--- interact with the variance estimation framework in ways that are not fully
addressed here. Second, multi-level treatment designs, where treatment is
assigned at a level that cross-cuts the survey design (e.g., county-level
policies in a state-stratified survey), raise additional questions about the
appropriate variance estimator. Third, extending the analytical framework to
non-smooth estimators such as synthetic control methods would broaden the
applicability of design-based DiD inference. Finally, the growing availability
of replicate weight files with public-use survey data creates an opportunity
for practitioners to implement design-based variance without needing access to
the underlying strata and PSU identifiers, which are sometimes suppressed for
confidentiality.

\subsection*{Code and data availability}

The simulation pipeline (DGP, estimator wrappers, table and figure
generation) and the NHANES illustration are implemented in Python against
\texttt{diff-diff} v3.3.2. The replication code is publicly archived at
\url{https://github.com/igerber/design-based-did-replication}, and the
v1.0 release accompanying this preprint is deposited at Zenodo
\citep{gerberreplication2026} under DOI
\href{https://doi.org/10.5281/zenodo.20097361}{10.5281/zenodo.20097361};
the concept DOI
\href{https://doi.org/10.5281/zenodo.20097360}{10.5281/zenodo.20097360}
resolves to the latest version. The NHANES public-use files are available
from the U.S.\ National Center for Health Statistics at
\url{https://wwwn.cdc.gov/nchs/nhanes}; we ship the analysis-ready golden
JSON used to produce Table~\ref{tab:nhanes} so that the empirical
illustration can be reproduced without re-downloading the raw NHANES
cycles.

% ======================================================================
\bibliography{references}

% ======================================================================
\clearpage
\appendix
\input{appendix}

\end{document}

%% file: tables/sim_s1.tex
\begin{table}[ht]
\centering
\small
\caption{Simulation results: Unconditional PT + complex survey.}
\label{tab:sim-s1}
\begin{tabular}{@{}llrrrrrrrr@{}}
\toprule
 & & \multicolumn{2}{c}{Bias} & \multicolumn{3}{c}{Coverage (\%)} &  & \multicolumn{2}{c}{RMSE} \\
\cmidrule(lr){3-4} \cmidrule(lr){5-7} \cmidrule(lr){9-10}
{Estimator} & {$n$} & {Unwt} & {Wtd} & {HC1} & {Cluster} & {Design} & {DEFF} & {Unwt} & {Wtd} \\
\midrule
CS (reg) & 500 & -0.000 & -0.000 & 83.3 & 93.8 & 93.5 & 2.1 & 0.065 & 0.069 \\
 & 2,000 & +0.003 & +0.003 & 61.8 & 95.0 & 94.9 & 5.2 & 0.052 & 0.054 \\
 & 8,000 & -0.000 & -0.000 & 34.2 & 94.2 & 93.7 & 17.3 & 0.050 & 0.051 \\
CS (DR) & 500 & +0.000 & -0.001 & 82.7 & 93.5 & 93.3 & 2.1 & 0.067 & 0.071 \\
 & 2,000 & -0.000 & -0.000 & 61.2 & 93.3 & 92.8 & 5.1 & 0.054 & 0.056 \\
 & 8,000 & -0.001 & -0.001 & 35.0 & 94.7 & 94.2 & 17.4 & 0.050 & 0.050 \\
Sun-Abraham & 500 & +0.000 & +0.000 & 84.1 & 94.7 & 94.7 & 2.1 & 0.064 & 0.069 \\
 & 2,000 & +0.000 & +0.000 & 61.0 & 93.5 & 93.2 & 5.2 & 0.053 & 0.055 \\
 & 8,000 & -0.001 & -0.001 & 34.6 & 94.7 & 94.4 & 17.4 & 0.049 & 0.049 \\
TWFE & 500 & -1.686 & -1.686 & 0.0 & 0.0 & 0.0 & 1.8 & 1.686 & 1.686 \\
 & 2,000 & -1.686 & -1.686 & 0.0 & 0.0 & 0.0 & 4.4 & 1.686 & 1.686 \\
 & 8,000 & -1.686 & -1.686 & 0.0 & 0.0 & 0.0 & 14.8 & 1.686 & 1.686 \\
\bottomrule
\end{tabular}
\\[2pt]\footnotesize \emph{Notes:} HC1 uses the unweighted point estimate (Unwt) with heteroskedasticity-robust standard errors. Cluster and Design both use the survey-weighted point estimate (Wtd); Cluster applies PSU-level clustering only, while Design applies full Taylor-series linearization with strata, PSU, and FPC.
\end{table}

%% file: tables/sim_s2.tex
\begin{table}[ht]
\centering
\small
\caption{Simulation results: Informative sampling + heterogeneous TE.}
\label{tab:sim-s2}
\begin{tabular}{@{}llrrrrrrrr@{}}
\toprule
 & & \multicolumn{2}{c}{Bias} & \multicolumn{3}{c}{Coverage (\%)} &  & \multicolumn{2}{c}{RMSE} \\
\cmidrule(lr){3-4} \cmidrule(lr){5-7} \cmidrule(lr){9-10}
{Estimator} & {$n$} & {Unwt} & {Wtd} & {HC1} & {Cluster} & {Design} & {DEFF} & {Unwt} & {Wtd} \\
\midrule
CS (reg) & 500 & -0.193 & +0.002 & 43.5 & 96.5 & 94.7 & 7.9 & 0.325 & 0.291 \\
 & 2,000 & -0.194 & -0.009 & 22.7 & 96.2 & 94.0 & 30.7 & 0.329 & 0.294 \\
 & 8,000 & -0.183 & +0.008 & 11.2 & 96.4 & 94.3 & 122.2 & 0.318 & 0.287 \\
CS (DR) & 500 & -0.188 & +0.004 & 43.5 & 96.9 & 94.5 & 7.8 & 0.323 & 0.297 \\
 & 2,000 & -0.182 & +0.002 & 25.4 & 95.9 & 94.0 & 30.3 & 0.318 & 0.291 \\
 & 8,000 & -0.198 & -0.014 & 12.1 & 96.0 & 94.2 & 121.7 & 0.325 & 0.292 \\
Sun-Abraham & 500 & -0.190 & +0.002 & 43.1 & 94.7 & 94.2 & 8.3 & 0.321 & 0.290 \\
 & 2,000 & -0.196 & +0.000 & 20.4 & 93.5 & 93.0 & 32.3 & 0.327 & 0.295 \\
 & 8,000 & -0.191 & +0.004 & 10.3 & 95.3 & 94.8 & 127.2 & 0.320 & 0.284 \\
TWFE & 500 & -2.159 & -2.089 & 0.0 & 0.0 & 0.0 & 8.7 & 2.159 & 2.090 \\
 & 2,000 & -2.159 & -2.089 & 0.0 & 0.0 & 0.0 & 34.0 & 2.159 & 2.089 \\
 & 8,000 & -2.157 & -2.086 & 0.0 & 0.0 & 0.0 & 135.7 & 2.157 & 2.086 \\
\bottomrule
\end{tabular}
\\[2pt]\footnotesize \emph{Notes:} HC1 uses the unweighted point estimate (Unwt) with heteroskedasticity-robust standard errors. Cluster and Design both use the survey-weighted point estimate (Wtd); Cluster applies PSU-level clustering only, while Design applies full Taylor-series linearization with strata, PSU, and FPC.
\end{table}

%% file: tables/sim_s3.tex
\begin{table}[ht]
\centering
\small
\caption{Simulation results: Repeated cross-section.}
\label{tab:sim-s3}
\begin{tabular}{@{}llrrrrrrrr@{}}
\toprule
 & & \multicolumn{2}{c}{Bias} & \multicolumn{3}{c}{Coverage (\%)} &  & \multicolumn{2}{c}{RMSE} \\
\cmidrule(lr){3-4} \cmidrule(lr){5-7} \cmidrule(lr){9-10}
{Estimator} & {$n$} & {Unwt} & {Wtd} & {HC1} & {Cluster} & {Design} & {DEFF} & {Unwt} & {Wtd} \\
\midrule
CS (reg) & 500 & +0.001 & -0.001 & 94.5 & 94.9 & 94.7 & 1.3 & 0.116 & 0.129 \\
 & 2,000 & -0.002 & -0.003 & 87.5 & 94.7 & 94.5 & 1.8 & 0.073 & 0.077 \\
 & 8,000 & -0.002 & -0.002 & 69.5 & 94.1 & 93.5 & 3.9 & 0.055 & 0.057 \\
\bottomrule
\end{tabular}
\\[2pt]\footnotesize \emph{Notes:} HC1 uses the unweighted point estimate (Unwt) with heteroskedasticity-robust standard errors. Cluster and Design both use the survey-weighted point estimate (Wtd); Cluster applies PSU-level clustering only, while Design applies full Taylor-series linearization with strata, PSU, and FPC.
\end{table}

%% file: tables/sim_s4.tex
\begin{table}[ht]
\centering
\small
\caption{Simulation results: Conditional PT (headline).}
\label{tab:sim-s4}
\begin{tabular}{@{}llrrrrrrrr@{}}
\toprule
 & & \multicolumn{2}{c}{Bias} & \multicolumn{3}{c}{Coverage (\%)} &  & \multicolumn{2}{c}{RMSE} \\
\cmidrule(lr){3-4} \cmidrule(lr){5-7} \cmidrule(lr){9-10}
{Estimator} & {$n$} & {Unwt} & {Wtd} & {HC1} & {Cluster} & {Design} & {DEFF} & {Unwt} & {Wtd} \\
\midrule
CS (reg) & 500 & +0.580 & +0.569 & 0.0 & 0.0 & 0.0 & 1.5 & 0.588 & 0.577 \\
 & 2,000 & +0.582 & +0.570 & 0.0 & 0.0 & 0.0 & 3.7 & 0.586 & 0.574 \\
 & 8,000 & +0.581 & +0.570 & 0.0 & 0.0 & 0.0 & 12.3 & 0.585 & 0.574 \\
CS (DR) & 500 & -0.002 & -0.002 & 82.8 & 93.5 & 92.7 & 1.8 & 0.093 & 0.093 \\
 & 2,000 & -0.000 & +0.000 & 64.6 & 94.8 & 94.7 & 4.5 & 0.071 & 0.070 \\
 & 8,000 & +0.000 & -0.000 & 37.0 & 94.6 & 94.0 & 15.2 & 0.068 & 0.068 \\
\bottomrule
\end{tabular}
\\[2pt]\footnotesize \emph{Notes:} HC1 uses the unweighted point estimate (Unwt) with heteroskedasticity-robust standard errors. Cluster and Design both use the survey-weighted point estimate (Wtd); Cluster applies PSU-level clustering only, while Design applies full Taylor-series linearization with strata, PSU, and FPC.
\end{table}

%% file: tables/nhanes.tex
\begin{table}[ht]
\centering
\small
\caption{ACA dependent coverage provision: DiD estimates of the effect on health insurance coverage using NHANES data, 2007--2008 vs.\ 2015--2016. Treatment group: ages 19--25; control group: ages 27--34.}
\label{tab:nhanes}
\begin{tabular}{@{}lrrrr@{}}
\toprule
Specification & ATT & SE & 95\% CI & df \\
\midrule
Naive (no weights, no design) & 0.065 & 0.035 & [-0.002, 0.133] & 2942 \\
Weights only (no clustering) & 0.096 & 0.037 & [0.023, 0.170] & 2945 \\
Full design (strata + PSU + weights) & 0.096 & 0.044 & [0.006, 0.187] & 31 \\
Full design + covariates & 0.060 & 0.046 & [-0.033, 0.153] & 31 \\
\bottomrule
\end{tabular}
\\[2pt]\footnotesize \emph{Notes:} The Naive and Weights-only specifications report residual degrees of freedom from the underlying linear regression ($n - p$). The Full design rows report survey degrees of freedom $\sum_h n_h - H$ (sampled PSUs minus strata; see Section~\ref{sec:theory}), which give the appropriate $t$-distribution for design-based inference.
\end{table}

%% file: appendix.tex
% ======================================================================
% Online Appendix
% Loaded via \input{appendix} from main.tex
% ======================================================================

\section{Per-Estimator Influence Function Details}
\label{app:if-details}

This appendix provides the smoothness arguments and IF characterizations for
each modern DiD estimator covered by Proposition~\ref{prop:main}, as well as
the two estimators excluded from the analytical framework.

\subsection{Estimators satisfying Assumption~\ref{ass:smooth}}

\paragraph{Callaway--Sant'Anna (regression).}
$T(F)$ involves population means of outcomes within group-time cells. Sample
means are smooth functionals of $F$. The ATT for cohort $g$ at time $t$ is:
\[
  \ATT(g,t) = \frac{\sum_{i \in G_g} w_i \Delta Y_i}{\sum_{i \in G_g} w_i}
            - \frac{\sum_{i \in C} w_i \Delta Y_i}{\sum_{i \in C} w_i},
\]
where $G_g$ denotes the cohort first treated at $g$ and $C$ the comparison
group. This is a difference of ratios of weighted totals, smooth in the survey-weighted
empirical distribution.

\paragraph{Callaway--Sant'Anna (doubly robust / IPW).}
$T(F)$ additionally involves a propensity score model (smooth in population
moments) and outcome regression (smooth in population moments).
\citet{santannazhao2020} derive the full IF, including nuisance-function
corrections, under their Theorem~3.1. The IF accounts for uncertainty in both
the propensity score and outcome regression, ensuring that the linearization
remainder is $o_p(n^{-1/2})$.

\paragraph{Sun--Abraham.}
$T(F)$ is a linear functional of interaction-weighted regression coefficients,
which are themselves smooth functionals of $F$ via the implicit function
theorem applied to the normal equations. Survey weights enter through WLS.

\paragraph{Imputation DiD (Borusyak--Jaravel--Spiess).}
$T(F)$ involves OLS on untreated observations (smooth), counterfactual
imputation (linear in coefficients), and averaging treatment-minus-imputed
residuals (smooth). The IF follows from Theorem~3 of \citet{borusyak2024},
which provides the full asymptotic representation including the first-stage
estimation uncertainty.

\paragraph{Two-way fixed effects (TWFE).}
Standard OLS regression with unit and time fixed effects. Because the number
of unit fixed effects grows with sample size, the implicit function theorem
applies to the concentrated (within-transformed) estimating equations, where
unit effects have been profiled out via demeaning. Smoothness of the
within-estimator follows directly. Under treatment effect heterogeneity, the
TWFE coefficient is a weighted combination of group-time ATTs with potentially
negative weights \citep{dechaisemartin2020}, but the functional remains smooth
regardless.

\paragraph{Two-stage DiD (Gardner).}
The first stage regresses outcomes on group and period fixed effects (along
with any covariates) using the untreated sample to identify the baseline
two-way structure; the second stage regresses residualized outcomes on
treatment indicators. The IF captures uncertainty from both stages. Smoothness
follows from the GMM representation.

\paragraph{Additional estimators.}
Efficient DiD \citep{chensantannaxie2025} derives the semiparametric efficient
influence function for DiD parameters; smooth by the same arguments as
Callaway--Sant'Anna. Triple difference extends the two-group sandwich to a
triple contrast; the IF follows by the same arguments as standard DiD.
Stacked DiD uses Q-weighted regression on stacked sub-experiments; smooth in
the population moments of each sub-experiment. Continuous DiD involves
B-spline regression coefficients, smooth functionals of $F$.

\paragraph{dCDH \citep{dechaisemartin2020}.}
The dCDH estimator is a smooth weighted average of group-time changes (the
$\mathrm{DID}_\ell$ family of switcher contrasts), where the weights are
group-period sample shares and the contrasts are differences of weighted
means. Smoothness follows under positivity of the comparison-cell denominators
and nondegenerate weights; the IF expands as a sum over group-time cells in
direct analogy to Callaway--Sant'Anna.

\paragraph{Heterogeneous Adoption DiD \citep{dechaisemartin2026had}.}
The HAD estimator targets a Weighted Average Slope (WAS) under heterogeneous
adoption with no fully untreated unit. The continuous-dose path uses a
local-linear nonparametric estimator at the support boundary with a
bias-corrected confidence interval; the mass-point path is a 2SLS
structural-residual contrast. Both are smooth functionals of group-time/cell moments under the
overlap and bandwidth conditions of \citet{dechaisemartin2026had}.

\paragraph{Wooldridge DiD (ETWFE).}
The estimator is a weighted least squares (or IRLS for logit/Poisson)
regression on saturated cohort-by-period interactions
\citep{wooldridge2021}. Smoothness follows from the same implicit-function
argument used for TWFE and Sun--Abraham, applied to the appropriate score
equations.

\paragraph{Staggered Triple Difference.}
The estimator extends the triple-difference IF representation to staggered
group-time contrasts, aggregating cell-level $\mathrm{ATT}(g,t)$ via
sample-share weights as in Callaway--Sant'Anna. Smoothness follows under the
same cell-probability and overlap conditions as the standard triple
difference and Callaway--Sant'Anna.

\subsection{Estimators excluded from the analytical framework}

\paragraph{Synthetic DiD.}
The synthetic control weight selection involves a non-smooth optimization
(constrained least squares on pre-treatment outcomes) that does not fit into
the smooth-functional framework. Survey support uses Rao--Wu rescaled bootstrap
\citep{raowu1988}, which resamples PSUs within strata and re-runs the full
estimator on each bootstrap draw, bypassing the need for an IF.

\paragraph{Triply Robust Panel Estimators (TROP).}
The TROP estimator involves a non-smooth optimization step. Per-observation
treatment effects $\tau_{it}$ are estimated from the data via a first-stage
optimization. The current implementation treats these as fixed across bootstrap
draws, varying only the ATT aggregation weights. This captures the variance of
the aggregation step but not the sampling variability of the first-stage
$\tau_{it}$ estimates, and therefore likely underestimates total variance. This
approximation is a pragmatic choice driven by the computational cost of
re-running the non-smooth optimization for each bootstrap draw; users should
interpret TROP survey standard errors as a lower bound.

% ======================================================================
\section{Replicate Weight Methods}
\label{app:replicate}

Replicate weights provide an alternative to TSL when pre-computed replicate
weight columns accompany the survey data (common for ACS PUMS, CPS, BRFSS,
and RECS; less common for NHANES, where CDC documentation generally
recommends TSL with the masked strata and PSU variables).
Instead of linearizing the estimator, they perturb the weights and observe the
resulting variation in estimates. Replicate weights are mutually exclusive with
strata/PSU/FPC at the design level: the design information is already embedded
in the replicate weight construction.

The general variance formula is:
\begin{equation}
  V_{\text{rep}} = c \sum_{r=1}^{R} s_r
    (\hat{\theta}_r - \hat{\theta}_{\text{center}})^2,
  \label{eq:replicate-general}
\end{equation}
where $\hat{\theta}_r$ is the estimate from replicate $r$,
$\hat{\theta}_{\text{center}}$ is either the full-sample estimate or the mean
of replicate estimates, and $c$ and $s_r$ are method-specific factors:

\noindent For BRR, Fay's BRR, JK1, and SDR, $s_r = 1$ for all replicates and
the method-specific factor $c$ is:

\begin{center}
\begin{tabular}{@{}ll@{}}
\toprule
Method & Factor $c$ \\
\midrule
BRR & $1/R$ \\
Fay's BRR & $1/[R(1-\rho)^2]$ where $\rho$ is the perturbation factor \\
JK1 & $(R-1)/R$ \\
SDR & $4/R$ \\
\bottomrule
\end{tabular}
\end{center}

\noindent For JKn (delete-one-group jackknife), $c = 1$ and
$s_r = (n_h - 1)/n_h$ for replicate $r$ belonging to stratum $h$, giving
$V_{\text{JKn}} = \sum_h \frac{n_h - 1}{n_h} \sum_{r \in h}
(\hat{\theta}_r - \hat{\theta})^2$.

For most IF-based estimators, the replicate estimate is computed by reweighting
the per-unit IF values rather than re-running the estimator --- a
weight-ratio rescaling that is numerically exact to first order and avoids the
cost of $R$ full re-fits. For imputation DiD and two-stage DiD, the analytical influence functions do
account for first-stage estimation uncertainty (as noted in
Appendix~\ref{app:if-details}), and the TSL variance path uses these complete
IFs directly. For the replicate-weight path, a subtlety arises: the
weight-ratio rescaling shortcut computes replicate estimates by reweighting
\emph{stored} IF values that were evaluated at the full-sample nuisance
parameter estimates. For estimators where the first-stage coefficients are
insensitive to small weight perturbations (e.g., CS-reg, where the IFs depend
only on weighted cell means), this first-order approximation is accurate. For
imputation DiD and two-stage DiD, the first-stage regression coefficients shift
materially with the replicate weights, making the stored-IF approximation less
accurate. The replicate-weight path therefore requires full re-estimation with
each replicate's weights, which produces exact (not linearized) replicate
estimates at the cost of $R$ additional fits.

%% file: references.bib
@article{binder1983,
  author  = {Binder, David A.},
  title   = {On the Variances of Asymptotically Normal Estimators from Complex Surveys},
  journal = {International Statistical Review},
  volume  = {51},
  number  = {3},
  pages   = {279--292},
  year    = {1983}
}

@article{demnati2004,
  author  = {Demnati, Abdellatif and Rao, J. N. K.},
  title   = {Linearization Variance Estimators for Survey Data},
  journal = {Survey Methodology},
  volume  = {30},
  number  = {1},
  pages   = {17--26},
  year    = {2004}
}

@article{lumley2004,
  author  = {Lumley, Thomas},
  title   = {Analysis of Complex Survey Samples},
  journal = {Journal of Statistical Software},
  volume  = {9},
  number  = {8},
  pages   = {1--19},
  year    = {2004}
}

@article{raowu1988,
  author  = {Rao, J. N. K. and Wu, C. F. Jeff},
  title   = {Resampling Inference with Complex Survey Data},
  journal = {Journal of the American Statistical Association},
  volume  = {83},
  number  = {401},
  pages   = {231--241},
  year    = {1988}
}

@article{atheyimbens2022,
  author  = {Athey, Susan and Imbens, Guido W.},
  title   = {Design-Based Analysis in Difference-in-Differences Settings with Staggered Adoption},
  journal = {Journal of Econometrics},
  volume  = {226},
  number  = {1},
  pages   = {62--79},
  year    = {2022}
}

@article{borusyak2024,
  author  = {Borusyak, Kirill and Jaravel, Xavier and Spiess, Jann},
  title   = {Revisiting Event-Study Designs: Robust and Efficient Estimation},
  journal = {Review of Economic Studies},
  volume  = {91},
  number  = {6},
  pages   = {3253--3285},
  year    = {2024}
}

@article{callawaysantanna2021,
  author  = {Callaway, Brantly and Sant'Anna, Pedro H. C.},
  title   = {Difference-in-Differences with Multiple Time Periods},
  journal = {Journal of Econometrics},
  volume  = {225},
  number  = {2},
  pages   = {200--230},
  year    = {2021}
}

@unpublished{callaway2024continuous,
  author = {Callaway, Brantly and Goodman-Bacon, Andrew and Sant'Anna, Pedro H. C.},
  title  = {Difference-in-Differences with a Continuous Treatment},
  note   = {NBER Working Paper 32117},
  year   = {2024}
}

@article{dechaisemartin2020,
  author  = {{de Chaisemartin}, Cl\'{e}ment and D'Haultf{\oe}uille, Xavier},
  title   = {Two-Way Fixed Effects Estimators with Heterogeneous Treatment Effects},
  journal = {American Economic Review},
  volume  = {110},
  number  = {9},
  pages   = {2964--2996},
  year    = {2020}
}

@unpublished{dechaisemartin2026had,
  author = {{de Chaisemartin}, Cl\'{e}ment and Ciccia, Diego and D'Haultf{\oe}uille, Xavier and Knau, Felix},
  title  = {Difference-in-Differences Estimators When No Unit Remains Untreated},
  note   = {arXiv:2405.04465v6},
  year   = {2026}
}

@unpublished{gardner2022,
  author = {Gardner, John},
  title  = {Two-Stage Differences in Differences},
  note   = {Working Paper},
  year   = {2022}
}

@article{roth2023,
  author  = {Roth, Jonathan and Sant'Anna, Pedro H. C. and Bilinski, Alyssa and Poe, John},
  title   = {What's Trending in Difference-in-Differences? {A} Synthesis of the Recent Econometrics Literature},
  journal = {Journal of Econometrics},
  volume  = {235},
  number  = {2},
  pages   = {2218--2244},
  year    = {2023}
}

@article{santannazhao2020,
  author  = {Sant'Anna, Pedro H. C. and Zhao, Jun},
  title   = {Doubly Robust Difference-in-Differences Estimators},
  journal = {Journal of Econometrics},
  volume  = {219},
  number  = {1},
  pages   = {101--122},
  year    = {2020}
}

@article{sunabraham2021,
  author  = {Sun, Liyang and Abraham, Sarah},
  title   = {Estimating Dynamic Treatment Effects in Event Studies with Heterogeneous Treatment Effects},
  journal = {Journal of Econometrics},
  volume  = {225},
  number  = {2},
  pages   = {175--199},
  year    = {2021}
}

@article{dugoff2014,
  author  = {DuGoff, Eva H. and Schuler, Megan and Stuart, Elizabeth A.},
  title   = {Generalizing Observational Study Results: Applying Propensity Score Methods to Complex Surveys},
  journal = {Health Services Research},
  volume  = {49},
  number  = {1},
  pages   = {284--303},
  year    = {2014}
}

@article{solon2015,
  author  = {Solon, Gary and Haider, Steven J. and Wooldridge, Jeffrey M.},
  title   = {What Are We Weighting For?},
  journal = {Journal of Human Resources},
  volume  = {50},
  number  = {2},
  pages   = {301--316},
  year    = {2015}
}

@article{ye2025,
  author  = {Ye, Kerry and Bilinski, Alyssa and Lee, Youjin},
  title   = {Difference-in-differences analysis with repeated cross-sectional survey data},
  journal = {Health Services \& Outcomes Research Methodology},
  year    = {2025},
  doi     = {10.1007/s10742-025-00364-7}
}

@article{zeng2025,
  author  = {Zeng, Yukang and Li, Fan and Tong, Guangyu},
  title   = {Moving toward Best Practice when Using Propensity Score Weighting in Survey Observational Studies},
  journal = {Statistics in Medicine},
  year    = {2025},
  doi     = {10.1002/sim.70555}
}

@article{antwi2013,
  author  = {Antwi, Yaa Akosa and Moriya, Asako S. and Simon, Kosali},
  title   = {Effects of Federal Policy to Insure Young Adults: Evidence from the 2010 {Affordable Care Act}'s Dependent-Coverage Mandate},
  journal = {American Economic Journal: Economic Policy},
  volume  = {5},
  number  = {4},
  pages   = {1--28},
  year    = {2013}
}

@article{sommers2012,
  author  = {Sommers, Benjamin D. and Kronick, Richard},
  title   = {The {Affordable Care Act} and Insurance Coverage for Young Adults},
  journal = {JAMA},
  volume  = {307},
  number  = {9},
  pages   = {913--914},
  year    = {2012}
}

@misc{nchs2018,
  author       = {{National Center for Health Statistics}},
  title        = {National Health and Nutrition Examination Survey: Analytic Guidelines, 2011--2016},
  howpublished = {Hyattsville, MD: U.S. Department of Health and Human Services},
  year         = {2018},
  url          = {https://wwwn.cdc.gov/nchs/data/nhanes/analyticguidelines/11-16-analytic-guidelines.pdf}
}

@unpublished{chensantannaxie2025,
  author = {Chen, Xiaohong and Sant'Anna, Pedro H. C. and Xie, Haitian},
  title  = {Efficient Difference-in-Differences and Event Study Estimators},
  note   = {arXiv:2506.17729},
  year   = {2025}
}

@article{wooldridge2021,
  author  = {Wooldridge, Jeffrey M.},
  title   = {Two-Way Fixed Effects, the Two-Way {Mundlak} Regression, and Difference-in-Differences Estimators},
  journal = {Empirical Economics},
  volume  = {69},
  number  = {5},
  pages   = {2545--2587},
  year    = {2025},
  note    = {First circulated as SSRN WP 3906345, 2021}
}

@unpublished{athey2025trop,
  author = {Athey, Susan and Imbens, Guido and Qu, Zhaonan and Viviano, Davide},
  title  = {Triply Robust Panel Estimators},
  note   = {arXiv:2508.21536},
  year   = {2025}
}

@article{cengiz2019,
  author  = {Cengiz, Doruk and Dube, Arindrajit and Lindner, Attila and Zipperer, Ben},
  title   = {The Effect of Minimum Wages on Low-Wage Jobs},
  journal = {Quarterly Journal of Economics},
  volume  = {134},
  number  = {3},
  pages   = {1405--1454},
  year    = {2019}
}

@article{bellmccaffrey2002,
  author  = {Bell, Robert M. and McCaffrey, Daniel F.},
  title   = {Bias Reduction in Standard Errors for Linear Regression with Multi-Stage Samples},
  journal = {Survey Methodology},
  volume  = {28},
  number  = {2},
  pages   = {169--181},
  year    = {2002}
}

@article{faygraubard2001,
  author  = {Fay, Michael P. and Graubard, Barry I.},
  title   = {Small-Sample Adjustments for {Wald}-Type Tests Using Sandwich Estimators},
  journal = {Biometrics},
  volume  = {57},
  number  = {4},
  pages   = {1198--1206},
  year    = {2001}
}

@misc{diffdiff2026,
  author    = {Gerber, Isaac},
  license   = {MIT},
  month     = apr,
  title     = {{diff-diff: Difference-in-Differences Causal Inference for Python}},
  publisher = {Zenodo},
  doi       = {10.5281/zenodo.19803705},
  url       = {https://doi.org/10.5281/zenodo.19803705},
  version   = {3.3.2},
  year      = {2026}
}

@misc{gerberreplication2026,
  author    = {Gerber, Isaac},
  license   = {MIT},
  title     = {Replication code: Design-Based Variance Estimation for Modern
               Heterogeneity-Robust Difference-in-Differences Estimators},
  publisher = {Zenodo},
  doi       = {10.5281/zenodo.20097361},
  url       = {https://doi.org/10.5281/zenodo.20097361},
  version   = {v1.0},
  year      = {2026}
}
